%
%
\documentclass[12pt,reqno]{amsart}
\usepackage{amsmath,amsthm,amssymb,amscd,url,rotating}
\usepackage[left=4cm,right=4cm,top=4cm,bottom=4cm]{geometry}
\usepackage{graphicx,epsf,epsfig,exscale}
\usepackage[usenames,dvipsnames]{color}
\usepackage{hyperref} 
\hypersetup{colorlinks=true}   
\usepackage[all]{xy}
\xyoption{curve}

\allowdisplaybreaks


\newcommand{\TITLE}{HOW TO MAKE THE MOST OF A SHARED MEAL: \\ PLAN THE LAST BITE FIRST.}
\newcommand{\TITLERUNNING}{HOW TO MAKE THE MOST OF A SHARED MEAL}
\newcommand{\DATE}{September 23, 2011}

\theoremstyle{plain} 
\newtheorem{theorem}{Theorem}

\newtheorem{lemma}[theorem]{Lemma}

\theoremstyle{definition}

\theoremstyle{remark}

\newtheorem*{acknowledgement}{Acknowledgements}


%
  {\end{list}}

%
  {\end{list}}

%



\newcommand{\RR}{\mathbb{R}}



\newcommand{\floor}[1]{\left\lfloor {#1} \right\rfloor}
\newcommand{\ceiling}[1]{\left\lceil {#1} \right\rceil}

\hyphenation{para-me-tri-za-tion}


\title[\TITLERUNNING]{\TITLE}
\date{\DATE}
\author{Lionel Levine}
\address{Department of Mathematics, MIT, 77 Massachusetts Ave., Cambridge, MA 02139. 
{\tt \url{http://math.mit.edu/~levine}}}
\email{levine@math.mit.edu}
\author{Katherine E. Stange} 
\address{Department of Mathematics, Stanford University, 450 Serra Mall, Building 380, Stanford, CA 94305.
{\tt \url{http://math.katestange.net}}}
\email{stange@math.stanford.edu}
\subjclass[2010]{91A10, 91A18, 91A05, 91A50}
\keywords{efficiently computable equilibrium, nonzero-sum selection game, subgame perfect equilibrium}


\begin{document}

\begin{abstract}

If you are sharing a meal with a companion, how best to make sure you get your favourite mouthfuls?   Ethiopian Dinner is a game in which two players take turns eating morsels from a common plate.  Each morsel comes with a pair of utility values measuring its tastiness to the two players.  
Kohler and Chandrasekaran discovered a good strategy --- a subgame perfect equilibrium, to be exact --- for this game.  We give a new visual proof of their
result. The players arrive at the equilibrium by figuring out their last move first and working backward. We conclude that it's never too early to start thinking about dessert.
\end{abstract}

\maketitle


\section*{Introduction}
\label{section:introduction}

Consider two friendly but famished acquaintances sitting down to dinner at an Ethiopian restaurant.  The food arrives on a common platter, and each friend has his own favourite and not-so-favourite dishes among the spread.  Hunger is a cruel master, and each of our otherwise considerate companions finds himself racing to swallow his favourites before his comrade can scoop them up.  Each is determined to maximize his own gastronomic pleasures, and could not care less about the consequences for his  companion.

An \emph{Ethiopian Dinner} is a finite set
\[
D = \{ m_1, \ldots, m_n \}, \quad \mbox{ with } \quad  m_i=(a_i,b_i)
\]
whose elements are called \emph{morsels}.  Each morsel $m_i$ is an ordered pair of real numbers $(a_i,b_i)$.  Two players, Alice and Bob, take turns removing one morsel from~$D$ and eating it.  The morsels are discrete and indivisible: each morsel can be eaten exactly once, and the game ends when all morsels have been eaten.  The larger the value of $a_i$, the tastier the morsel for Alice; the larger $b_i$, the tastier for Bob.  We assume that the \emph{players know one another's preferences}, and that the preferences are totally ordered, that is, $a_i \neq a_j$ and $b_i \neq b_j$ for $i\neq j$.  Alice's score is the sum of the~$a_i$ for the morsels~$m_i$ she eats, while Bob's score is the sum of the~$b_i$ for the morsels~$m_i$ he eats. 

In such a game, the players are not adversaries; in fact, the game may end quite peaceably and successfully for both players if they have dissimilar tastes.  The question we are interested in is this:  if a player acts rationally to maximize her own score, and assumes that her meal partner does the same, what should be her strategy?  

Eating your favourite morsel on the first move of an Ethiopian Dinner is not necessarily a good strategy.  For example, if the dinner is
	\[ D = \{ (1,2), (2,3), (3,1) \}, \]
then Alice's favourite morsel is $(3,1)$.  If she takes this morsel first, then Bob will take $(2,3)$, leaving Alice with $(1,2)$ for a total score of $4$.  Instead Alice should snag $(2,3)$ on the first move; after Bob takes $(1,2)$, Alice can finish up with $(3,1)$ for dessert and a total score of $5$.

If deciding on the first move in an Ethiopian Dinner appears complicated, the \emph{last move} is a different matter.  The subject of this paper is a strategy discovered by Kohler and Chandrasekaran~\cite{KC}, which we call the \emph{crossout strategy}.  Its mantra is:
\begin{quote}
  ``{\em Eat your opponent's least favourite morsel on your own last move.}''
\end{quote}
To arrive at this strategy, each player reasons informally as follows. My opponent will never choose her least favourite morsel, unless it is the only one left; therefore, unless this is my last move, I can safely save my opponent's least favourite morsel for later.  

This reasoning predicts that if, say, Bob has the last move of the game, then Bob's last move will be to eat Alice's least favourite morsel.  Because this is a game of perfect information, both players can use this reasoning to predict with certainty the game's last move.  We now cross out Alice's least favourite morsel from the dinner $D$ to arrive at a smaller dinner $D'$ in which Alice has the last move.  The same reasoning now implies that on her last move, Alice will eat Bob's least favourite morsel in $D'$.  We then cross out Bob's least favourite morsel from $D'$ and proceed inductively, alternately crossing out Alice's least favourite and Bob's least favourite among the remaining morsels until all morsels have been crossed out.  The \emph{crossout strategy} is to eat the \emph{last} morsel to be crossed out. 

\subsection*{Strategies and equilibria}

To convert the informal reasoning above into a proof that crossout is a ``good strategy,'' we need to define both words!  A \emph{strategy} is any function $s$ from nonempty dinners to morsels, such that $s(D) \in D$ for all nonempty dinners $D$.  Playing strategy $s$ means that if it is your turn and the remaining set of morsels is $D$, then you eat the morsel $s(D)$.  For example, the \emph{Alice greedy} strategy chooses the morsel $m_i \in D$ such that $a_i$ is largest.  Either player could play this strategy. (It would be pretty spiteful of Bob to do so!)  Other examples are the ``competitive'' strategy that chooses the morsel maximizing $a_i+b_i$, the ``Alice cooperative'' strategy choosing the morsel maximizing $a_i - b_i$, and the ``Alice masochistic'' strategy choosing the morsel minimizing $a_i$.  A strategy can be as complicated as you like: perhaps if the number of morsels is prime, you greedily choose your own favourite morsel, and otherwise you spitefully choose your opponent's favourite. 

The appropriate notion of good strategy depends on the class of games one is considering.  Ethiopian Dinner is a nonzero-sum game: one player's gain may not be the other's loss.  In such a game, the basic requirement of any pair of good strategies (one for Alice, one for Bob) is that they form a \emph{Nash equilibrium}, which means that neither player can benefit himself by changing strategies unilaterally.  

A Nash equilibrium represents a stable, predictable outcome: Alice can declare, ``I am playing my equilibrium strategy, and you'd do best to play yours.''  If Bob responds rationally by playing his own  equilibrium strategy, then both players know how the game will turn out.  

\begin{figure}
  \begin{center}
\[
\begin{xy}
0;
(0.95,0):
(10,10)*{\includegraphics[width=4in]{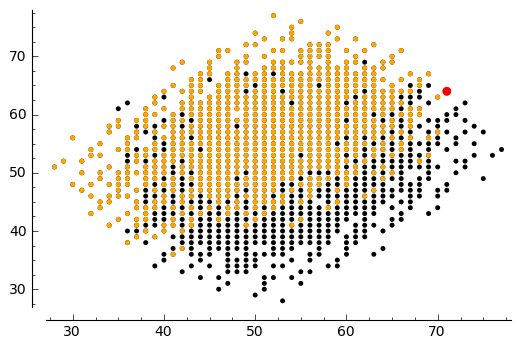}};
(-46,-5)*{\begin{sideways}\begin{footnotesize}$\mbox{Bob's score}\longrightarrow$\end{footnotesize}\end{sideways}};
(-20,-28)*{\mbox{\begin{footnotesize}Alice's score\end{footnotesize}}\longrightarrow};
\end{xy}
\]
  \end{center}
  \caption{Plot of the score pairs for all possible outcomes of a permutation dinner $D$ of size $14$.  The large dot {\Large ${\color{red}\bullet}$} at upper right represents the score Alice and Bob receive if they both play the crossout strategy.  The light dots ${\color{BurntOrange}\bullet}$ represent scores for strategy pairs of the form $(s,c)$ for $s$ arbitrary: these are all of the outcomes Alice can obtain playing against Bob's crossout strategy.  According to Theorem~\ref{thm:crossoutequil}, among these outcomes she does best when she herself plays crossout.  Dark dots $\bullet$ represent the outcomes of all other strategy pairs.  Produced using Sage Mathematics Software~\cite{Sage}.}
  \label{fig:typicalscores}
\end{figure}

\subsection*{Subgame perfect equilibria}

A game may have many equilibrium strategy pairs, some with better outcomes than others, so one tends to look for equilibria with further desirable properties.  Which properties again depends on the class of games being considered.  In game theory lingo, Ethiopian Dinner is a \emph{perfect-information} \emph{non-cooperative game in extensive form}.  That is, both players know the values $a_1,\ldots,a_n$ and $b_1,\ldots,b_n$ (perfect information); the players may not bargain or make side deals (non-cooperative); and the players alternate making moves (extensive form).  


Non-cooperative games model situations in which the players have no way of communicating (our dinner guests may be too busy stuffing their faces to bother with conversation) or are forbidden to collude.  For instance, airlines are forbidden by law from colluding to fix prices.  Colluding to fix the outcome of a meal is still legal in most countries, but Alice might nevertheless be dissuaded by cultural taboo from making propositions like ``If you pay me fifty cents I promise not to eat any more spinach.''  

An Ethiopian Dinner with $n$ morsels is certain to end in $n$ moves.  A widely accepted notion of a good strategy for games of this type (perfect information, non-cooperative, extensive form, finite length) is the \emph{subgame perfect equilibrium}.   This is a refinement of the Nash equilibrium which requires that the strategies remain in equilibrium when restricted to any subgame.  In our case, a subgame is just a \emph{subdinner} consisting of a subset of the morsels, with the same player moving last.  A subgame perfect equilibrium is robust in the sense that even if one player, say Bob, makes a ``mistake'' on a particular move by deviating from his equilibrium strategy, Alice can confidently continue playing her equilibrium strategy because the same strategy pair is still an equilibrium of the resulting subgame.  See, e.g.,~\cite{Owen} and \cite{Rasm} for background on these concepts.

Let $c$ be the crossout strategy described above for Ethiopian Dinner.  We will give a new proof of the following theorem, which is due to Kohler and Chandrasekaran~\cite{KC}.

\begin{theorem}
  \label{thm:crossoutequil}
  The pair $(c,c)$ is a subgame perfect equilibrium. 
\end{theorem}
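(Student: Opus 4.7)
The plan is to prove Theorem~\ref{thm:crossoutequil} by induction on $n = |D|$, with inductive hypothesis that $(c,c)$ is a subgame perfect equilibrium on every dinner of size less than $n$. The base case $n = 1$ is immediate: the sole player eats the only morsel. For the inductive step, consider a dinner $D$ of size $n$ and, by symmetry, suppose Alice moves first. By the inductive hypothesis, after Alice's first move both players follow crossout on the remaining dinner; hence we need only verify that Alice's first move $c(D)$ maximizes her eventual score over all first moves $m' \in D$. Writing $U_A(D',B)$ for Alice's score under $(c,c)$ on a dinner $D'$ with Bob to move first, the required inequality is
\[
a(c(D)) + U_A(D\setminus\{c(D)\},B) \ \geq\ a(m') + U_A(D\setminus\{m'\},B)
\qquad \text{for every } m' \in D.
\]

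A parity observation makes this comparison tractable: the last player of $D$ (Alice first, $n$ morsels) coincides with the last player of $D\setminus\{m'\}$ (Bob first, $n-1$ morsels), because deleting one morsel flips the parity of the dinner at the same time the first mover is switched. Hence the backward crossout procedures on $D$ and on $D\setminus\{m'\}$ begin with the same \emph{type} of removal (the current least favorite of the common last player's opponent) and continue with the same alternation. I would exploit this alignment to match the two procedures step by step and track which player ends up eating each morsel.

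The crux is a case split on whether $m'$ belongs to the set $\mathcal{A}\subseteq D$ of morsels Alice eats under $(c,c)$ on $D$. If $m' \in \mathcal{A}$, the two procedures agree until the step eliminating $m'$ in $D$, and I expect the subsequent reshuffling still to leave Alice with every morsel of $\mathcal{A}\setminus\{m'\}$, so her total is unchanged. If $m' \notin \mathcal{A}$, Alice is stealing a morsel Bob was slated to eat; the subgame crossout then reassigns precisely one morsel of $\mathcal{A}$ back to Bob as a kind of ``retaliation,'' and I would show this retaliation morsel has $a$-value at least $a(m')$, making Alice's net change nonpositive.

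The main obstacle is the second case: pinpointing the particular retaliation morsel and verifying the $a$-value inequality. This requires careful control of how the alternating $A$-min / $B$-min removal sequence propagates under deletion of a single morsel. A visual approach --- drawing the morsels as points $(a_i,b_i)$ in the plane and watching the crossout procedure repeatedly strip off the leftmost or bottommost surviving point --- seems most promising: the retaliation morsel should emerge geometrically, and the inductive hypothesis should then close the inequality.
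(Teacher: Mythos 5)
Your overall reduction is sound and is in fact the same as the paper's: backward induction on $|D|$ reduces the whole theorem to the single inequality $u_P(m') + \chi_P(D-m') \le \chi_P(D)$ for the first player $P$ and every morsel $m'$, where $\chi_P$ denotes the crossout score (this is the paper's Main Lemma). The gap is in your proposed proof of that inequality: both structural claims in your case split are false, because deleting one morsel can trigger a cascade that shifts the entire crossout assignment, changing Alice's set of eaten morsels by more than one element. Take $D = \{(1,1),(2,2),(3,3),(4,4)\}$, whose crossout sequence is $(1,1),(2,2),(3,3),(4,4)$; under $(c,c)$ Alice, moving first, eats $\mathcal{A}=\{(4,4),(2,2)\}$ for a score of $6$. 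If she deviates to $m'=(2,2)\in\mathcal{A}$, crossout play on $D-m'$ hands her only $(3,3)$, not $\mathcal{A}\setminus\{m'\}=\{(4,4)\}$, and her total drops to $5$ --- so in your first case her total is \emph{not} unchanged. If she deviates to $m'=(1,1)\notin\mathcal{A}$, the subgame again gives her only $\{(3,3)\}$: both morsels of $\mathcal{A}$ are lost and a new one appears, so there is no single ``retaliation'' morsel to pinpoint.

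What is true, and what the paper proves, is a statement about sums rather than sets. Label the morsel crossed out at step $k$ with the number $k$ on the axis of the player who crossed it out (the ``crossout board''). The paper's Crossout Board Lemma says that for any subdinner $\widehat{D}\subset D$, each label $k$ sits at least as far from the origin on the board of $\widehat{D}$ as on the board of $D$; it is proved by assuming a minimal violating label and exhibiting a smaller one. Since $P$ moves first in $D$ but second in $D-m'$, the number of labels on $P$'s axis is the same on both boards, so the sum of labeled positions on $P$'s axis weakly increases, hence the sum of \emph{unlabeled} positions --- which equals $\chi_P(D)$ on one board and $\chi_P(D-m')+u_P(m')$ on the other --- weakly decreases, which is exactly the required inequality. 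Your instinct to draw the morsels in the plane and match the two crossout procedures step by step is the right picture, but the invariant to carry through the matching is ``every label moves weakly outward,'' not ``at most one morsel changes hands.'' As it stands, your second case (which you correctly flag as the main obstacle) cannot be completed in the form stated.
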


\noindent In other words, if Alice plays crossout, then Bob cannot benefit himself by playing a different strategy, and vice versa.

Figure~\ref{fig:typicalscores} illustrates Theorem~\ref{thm:crossoutequil} in the case of a particular \emph{permutation dinner}, that is, a dinner of the form
\[
D = \{ (1,b_1), (2,b_2), \ldots, (n,b_n) \}
\]
where $b_1,\ldots,b_n$ is a permutation of the numbers $1,\ldots,n$.  Each dot in the figure represents the outcome of a strategy pair, with Alice's score plotted on the horizontal axis and Bob's score on the vertical axis, for the following permutation dinner of size $14$: 
\begin{align*}
D = &\{  (1, 6), (2, 14), (3, 10), (4, 3), (5, 7), (6, 5), (7, 9), (8, 8), \\ &\quad\quad (9,4), (10, 13), (11, 12), (12, 11), (13, 2), (14, 1) \}.
\end{align*}
To visualize Theorem~\ref{thm:crossoutequil}, note that the large dot {\Large ${\color{red}\bullet}$} in Figure~\ref{fig:typicalscores}, which represents the outcome when both players play crossout, is rightmost among all possible outcomes achievable by Alice given that Bob plays crossout (such outcomes are indicated by light dots {${\color{BurntOrange}\bullet}$}).

Recall that \emph{any} function $s$ from dinners to morsels such that $s(D) \in D$ for all dinners $D$ counts as a strategy.  In particular, let $k$ be the number of moves that Alice makes in $D$.  For any subset $A \subset D$ of size $k$, there is a strategy pair $(s,s')$ such that if Alice plays $s$ and Bob plays $s'$, then Alice will eat precisely the morsels in $A$ and Bob will eat the rest.  This is how we computed the set of all dots in Figure~\ref{fig:typicalscores}.  
 
Part of the difficulty of proving that a pair of strategies is in equilibrium -- and also part of the power of the result once it is proved -- stems from the vast number of possible strategies.  How are we to rule out that \emph{any conceivable} strategy $s$, which might be arbitrarily subtle, clever and complicated, performs better against $c$ than does $c$ itself?  
For all we know, Alice has access to unlimited computational resources, and her strategy $s$ could involve factoring enormous integers or solving large instances of the traveling salesman problem.  Nevertheless, Theorem~\ref{thm:crossoutequil} ensures that if Bob plays $c$, then Alice would do at least as well playing $c$ as $s$. 
In a sense, a theorem about equilibrium is a theorem about the limits of intelligence: 
An equilibrium enables you to hold the line against a smarter opponent!  
 


\subsection*{Crossout is an efficiently computable equilibrium}

In games arising in the real world, for instance in evolutionary dynamics and in economics, the appeal of the Nash equilibrium concept is twofold.  First, it can explain why we observe certain strategies and not others.  Second, even in the case of a game that has multiple equilibria and lacks a well-defined ``best'' outcome, knowing an explicit equilibrium provides certainty.  Alice simply announces her intention to play crossout, refers Bob to the proof of Theorem~\ref{thm:crossoutequil} and trusts that his own best interest compels him to follow suit.  What might have been a tense evening with an unpredictable outcome becomes a more relaxed affair in which each player can predict in advance which morsels she will be gobbling up.

To reap these benefits, the players must be able to compute an equilibrium pair, not just know that one exists!  A recent strand of research, popularized by the slogan ``\emph{if your laptop can't find it, then, probably, neither can the market},'' has explored the tendency for equilibria to be extremely difficult to compute~\cite{DGP}.  The general existence proof for subgame perfect equilibria \cite[VIII.2.10]{Owen} uses a backward induction from the last move: if converted naively into an algorithm, it would seem to require searching through exponentially many move sequences in order to find an equilibrium.  This kind of brute force search is typically out of the question even for games of moderate size (for example, an Ethiopian Dinner of $n$ morsels has $n!$ possible move sequences).
For this reason, it is always interesting to identify special classes of games that have efficiently computable equilibria.   The crossout equilibrium for Ethiopian Dinner is an example: if both players play the crossout strategy, then they eat the morsels in reverse order of the crossouts.  In this case, the entire move sequence of the dinner can be worked out in the order $n \log n$ time it takes to sort the two lists $a_1,\ldots,a_n$ and $b_1,\ldots,b_n$. 

\section*{Proof of equilibrium}

\subsection*{Dinners and strategies}
\label{section:dinnersstrategies}

A \emph{dinner} is a finite set of morsels \[D = \{m_1,\ldots,m_n\}.\]  Each morsel $m \in D$ comes with a pair of real numbers $u_A(m), u_B(m)$ representing its utility to Alice and Bob.   We often write $m$ as an ordered pair,
\[
m = \left( u_A(m), u_B(m) \right).
\]

We adopt the convention that \emph{Bob has the final move} by default.  Since moves alternate, the first move is determined by the parity of $n$. Alice has the first move if $n$ is even, and Bob has the first move if $n$ is odd.  

A \emph{strategy} is a map assigning to any non-empty dinner $D$ a morsel $s(D) \in D$ to be eaten by the first player.  Suppose that $P \in \{ \mbox{Alice, Bob} \}$ is a player, $D$ is a dinner, and that $P$ plays strategy $s$.  If it is $P$'s turn to move, he selects morsel $s(D)$ and receives payoff $u_P(s(D))$.  The remaining dinner is $D - s(D)$, with his opponent to move.  Suppose his opponent plays strategy~$t$.  The \emph{score} $v_P^D(s,t)$ of player $P$ is defined by the recurrence
	\begin{equation} \label{eq:scoredef}
	v_P^D(s,t) = \begin{cases}
	   v_P^{D-s(D)}(s,t) + u_P(s(D))&  \mbox{if $P$ plays first in $D$,} \\
	   v_P^{D-t(D)}(s,t) & \mbox{if $P$ plays second in $D$,} \\
	   0 & \mbox{if }D  = \emptyset. 
	 \end{cases}
	\end{equation}
	where for $m \in D$, the dinner $D - m$ denotes $D$ with morsel $m$ removed.  Since $D$ has finitely many morsels, equation \eqref{eq:scoredef} defines $v_P^D(s,t)$ uniquely. 
	
	Our convention in denoting a player's score is that his own strategy is always the first listed in the ordered pair.  
	
 Formally, we can regard Ethiopian Dinner as a single game whose positions comprise all finite dinners.  A pair of strategies $(s,t)$ is a \emph{subgame perfect equilibrium} for this game if
		\[ v^D_A(s',t) \leq v^D_A(s,t)  \quad \mbox{ and } \quad v^D_B(t',s) \leq v^D_B(t,s) \]
	for all strategies $s'$ and $t'$ and all finite dinners $D$.
	
\subsection*{The crossout strategy}
\label{section:crossoutstrategy}

After giving the formal definition of the crossout strategy described in the introduction, we explain how to visualize it using a ``crossout board'' and prove the lemma that lies at the heart of our argument, the Crossout Board Lemma (Lemma \ref{lem:nmoveout}).

Let $D$ be a set of $n$ morsels.  Write $\ell_A(D)$ for Alice's least favourite morsel in $D$, and $\ell_B(D)$ for Bob's least favourite morsel in $D$.  Let $D_1 = D$, and
	\[ D_{i+1} = D_i - m_i, \qquad i=1,\ldots,n-1 \]
where
	\[ m_i = \begin{cases} \ell_A(D_i), & i \mbox{ odd} \\
					\ell_B(D_i), & i \mbox{ even}. 
			\end{cases} \]
The sequence of morsels
	\[ m_1, m_2, \ldots, m_n \]
is called the \emph{crossout sequence} of $D$.  Note that 
	\begin{align*} 
	&\mbox{$m_1$ is Alice's least favourite morsel in $D$} \\
	&\mbox{$m_2$ is Bob's least favourite morsel in $D-m_1$}\\
	&\mbox{$m_3$ is Alice's least favourite morsel in $D-m_1-m_2$}\\
	&\mbox{$m_4$ is Bob's least favourite morsel in $D-m_1-m_2-m_3$}\\
	&\quad \vdots
	\end{align*}
Now suppose $D$ is a dinner (i.e., a set of $n$ morsels with Bob distinguished to move last).  The \emph{crossout strategy} $c$ is defined by
	$ c(D) = m_n. $
Note that if both players play the crossout strategy, then they eat the morsels in \emph{reverse order} of the crossout sequence:
	\begin{align*}  & m_n = c(D) \\  &m_{n-1} = c(D-m_n) \\ & m_{n-2} = c(D-m_n-m_{n-1}) \\ & \quad \vdots \\
	 & m_1 = c(D-m_n-\cdots-m_{2}).
	\end{align*}
Thus $m_1$, which is Alice's least favourite morsel in $D$, is eaten by Bob on the last turn.
\subsection*{Crossout boards}
\label{section:crossoutboards}

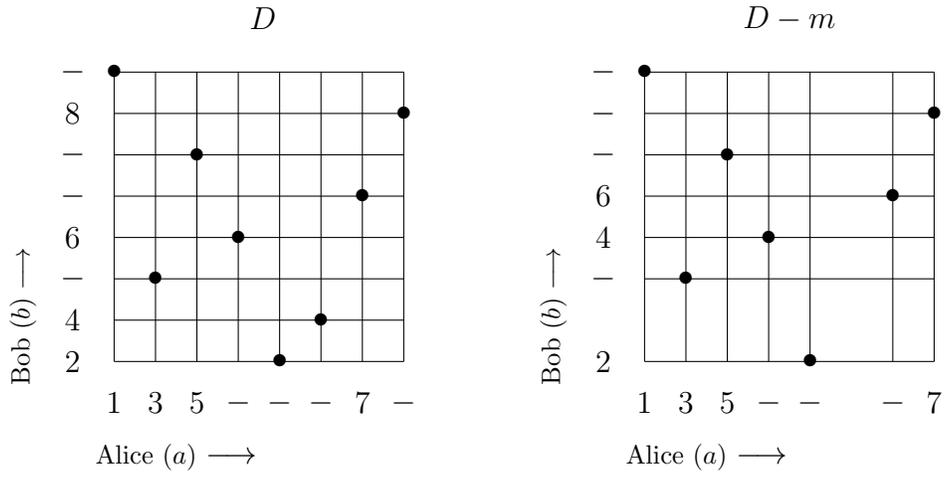
\begin{figure}
\[
\begin{xy}
0;
(0.55,0):
(0,0)*{}; (70,0)*{} **\dir{-};
(0,10)*{}; (70,10)*{} **\dir{-};
(0,20)*{}; (70,20)*{} **\dir{-};
(0,30)*{}; (70,30)*{} **\dir{-};
(0,40)*{}; (70,40)*{} **\dir{-};
(0,50)*{}; (70,50)*{} **\dir{-};
(0,60)*{}; (70,60)*{} **\dir{-};
(0,70)*{}; (70,70)*{} **\dir{-};
(0,0)*{}; (0,70)*{} **\dir{-};
(10,0)*{}; (10,70)*{} **\dir{-};
(20,0)*{}; (20,70)*{} **\dir{-};
(30,0)*{}; (30,70)*{} **\dir{-};
(40,0)*{}; (40,70)*{} **\dir{-};
(50,0)*{}; (50,70)*{} **\dir{-};
(60,0)*{}; (60,70)*{} **\dir{-};
(70,0)*{}; (70,70)*{} **\dir{-};
(30,30)*{\bullet};
(10,20)*{\bullet};
(60,40)*{\bullet};
(50,10)*{\bullet};
(20,50)*{\bullet};
(70,60)*{\bullet};
(0,70)*{\bullet};
(40,0)*{\bullet};
(-10,0)*{2};
(-10,10)*{4};
(-10,20)*{-};
(-10,30)*{6};
(-10,40)*{-};
(-10,50)*{-};
(-10,60)*{8};
(-10,70)*{-};
(0,-10)*{1};
(10,-10)*{3};
(20,-10)*{5};
(30,-10)*{-};
(40,-10)*{-};
(50,-10)*{-};
(60,-10)*{7};
(70,-10)*{-};
(-22,11)*{\begin{sideways}\begin{footnotesize}$\mbox{Bob ($b$)}\longrightarrow$\end{footnotesize}\end{sideways}};
(15,-23)*{\mbox{\begin{footnotesize}Alice ($a$)\end{footnotesize}}\longrightarrow};
(36,83)*{D}
\end{xy}
\quad\quad\quad\quad
\begin{xy}
0;
(0.55,0):
(0,0)*{}; (70,0)*{} **\dir{-};
(0,20)*{}; (70,20)*{} **\dir{-};
(0,30)*{}; (70,30)*{} **\dir{-};
(0,40)*{}; (70,40)*{} **\dir{-};
(0,50)*{}; (70,50)*{} **\dir{-};
(0,60)*{}; (70,60)*{} **\dir{-};
(0,70)*{}; (70,70)*{} **\dir{-};
(0,0)*{}; (0,70)*{} **\dir{-};
(10,0)*{}; (10,70)*{} **\dir{-};
(20,0)*{}; (20,70)*{} **\dir{-};
(30,0)*{}; (30,70)*{} **\dir{-};
(40,0)*{}; (40,70)*{} **\dir{-};
(60,0)*{}; (60,70)*{} **\dir{-};
(70,0)*{}; (70,70)*{} **\dir{-};
(30,30)*{\bullet};
(10,20)*{\bullet};
(60,40)*{\bullet};
(20,50)*{\bullet};
(70,60)*{\bullet};
(0,70)*{\bullet};
(40,0)*{\bullet};
(-10,0)*{2};
(-10,10)*{};
(-10,20)*{-};
(-10,30)*{4};
(-10,40)*{6};
(-10,50)*{-};
(-10,60)*{-};
(-10,70)*{-};
(0,-10)*{1};
(10,-10)*{3};
(20,-10)*{5};
(30,-10)*{-};
(40,-10)*{-};
(50,-10)*{};
(60,-10)*{-};
(70,-10)*{7};
(-22,11)*{\begin{sideways}\begin{footnotesize}$\mbox{Bob ($b$)}\longrightarrow$\end{footnotesize}\end{sideways}};
(15,-23)*{\mbox{\begin{footnotesize}Alice ($a$)\end{footnotesize}}\longrightarrow};
(35,83)*{D-m}
\end{xy}
\]

 \caption{Left: Example of a crossout board for a dinner $D$ with 8 morsels.  Labels on the axes indicate the crossout sequence.  Right: The crossout board for the dinner $D-m$, in which a morsel $m$ has been removed.  By the Crossout Board Lemma, each label on the right is at least as far from the origin as the corresponding label on the left. }
  \label{figure:crossoutboard}
\end{figure}

To prepare for the proof of Theorem \ref{thm:crossoutequil}, it is convenient to illustrate the crossout sequence with a \emph{crossout board}, as in Figure~\ref{figure:crossoutboard}.  We display the dinner on a Cartesian coordinate plane. Each morsel $m=(a,b)$ is graphed as a dot at coordinate $(a,b)$.  Since we assume that the players' preferences are totally ordered, each vertical or horizontal line passes through at most one morsel.  The crossout sequence itself is indicated by writing the number (or \emph{label}) $i$ on the $a$-axis below $m_i$ if $i$ is odd, and on the $b$-axis to the left of $m_i$ if $i$ is even.  

Figure~\ref{figure:crossoutboard} shows the crossout board of the dinner 
	\[ D = \{ (1,8), (2,3), (3,6), (4,4), (5,1), (6,2), (7,5), (8,7) \} \]
and of $D-m$, where $m$ is the morsel $(6,2)$.  It is helpful to imagine placing the labels on a crossout board one at a time in increasing order.  Alice starts at the left and scans rightward, placing the label $1$ below her least favourite morsel.  Then Bob starts at the bottom and scans upward, placing the label $2$ to the left of his least favourite unlabeled morsel.  The players alternate in this fashion until all morsels are labeled.  Note that the labels on each axis appear in increasing order moving away from the origin.  
Alice always performs the first crossout, because of our convention that Bob has the last move.  Hence, the odd labels appear on Alice's axis and the even labels on Bob's axis.


The central lemma needed to show that crossout is an equilibrium is the following.

\begin{lemma}[Crossout Board Lemma]
\label{lem:nmoveout}
Let $D$ be a dinner, and $\widehat{D} \subset D$ a subdinner.  For each $k=1,\ldots,|\widehat{D}|$ the location of label $k$ in the crossout board of $\widehat{D}$ is at least as far from the origin as the location of label $k$ in the crossout board of $D$.
\end{lemma}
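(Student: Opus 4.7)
The plan is to reduce to the case of removing a single morsel and then track step-by-step how the crossout sequence changes under that removal. Since any subdinner can be reached by a chain of single-morsel deletions $D = D^{(0)} \supset D^{(1)} \supset \cdots \supset D^{(r)} = \widehat{D}$, and the axis on which label $k$ sits is determined by the parity of $k$ alone, it suffices to show each single deletion only moves every label farther from the origin along its axis; the inequalities then compose along the chain.

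Fix $\widehat{D} = D - m$, write $m_1,\ldots,m_n$ and $m_1',\ldots,m_{n-1}'$ for the crossout sequences of $D$ and $\widehat{D}$, let $j$ be the index with $m_j = m$, and let $P_i$ denote the player whose axis receives label $i$ (Alice if $i$ is odd, Bob if $i$ is even). The heart of the proof is a coupled invariant maintained by induction on $i$: a ``difference morsel'' $\delta_i \in D_{i+1}$ with $\widehat{D}_{i+1} = D_{i+1} - \delta_i$, initialized at $\delta_{j-1} := m$, together with the assertion that the $P_i$-coordinate of $m_i'$ is at least the $P_i$-coordinate of $m_i$.

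For $i < j$ the invariant is trivial: $m_i$ is $P_i$'s least favourite morsel in $D_i$ and still lies in the smaller set $\widehat{D}_i = D_i - m$, so $m_i' = m_i$. For $i \geq j$, the step divides according to whether $m_i = \delta_{i-1}$. If not, $m_i$ still lies in $\widehat{D}_i = D_i - \delta_{i-1}$ and is still the $P_i$-minimum there, so $m_i' = m_i$; setting $\delta_i := \delta_{i-1}$ continues the invariant. If $m_i = \delta_{i-1}$, then $\widehat{D}_i = D_i - m_i$ and $m_i'$ is the $P_i$-minimum of this set, i.e.\ the second-smallest $P_i$-value in $D_i$, which strictly exceeds $P_i(m_i)$; setting $\delta_i := m_i'$ continues the invariant. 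The base case $i = j$ falls into this second subcase, since $\delta_{j-1} = m = m_j$.

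The main piece of bookkeeping is verifying that the invariant $\widehat{D}_{i+1} = D_{i+1} - \delta_i$ really propagates. The unequal case gives $\widehat{D}_{i+1} = (D_i - \delta_{i-1}) - m_i = D_{i+1} - \delta_{i-1}$, using $\delta_{i-1} \neq m_i$. The equal case gives $\widehat{D}_{i+1} = (D_i - m_i) - m_i' = D_{i+1} - m_i'$, and one checks $m_i' \in D_{i+1}$ because $m_i' \in \widehat{D}_i = D_i - m_i = D_{i+1}$. Once this bookkeeping is in place, the coordinate inequality at each step is immediate from the principle that shrinking a set can only raise its minimum.
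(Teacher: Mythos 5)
Your proof is correct, but it takes a genuinely different route from the one in the paper. The paper works with an arbitrary subdinner $\widehat{D} \subset D$ all at once: it defines a label $k$ to be \emph{jumpy} if it sits strictly closer to the origin in $\widehat{D}$'s board, proves the two implications ($k$ jumpy $\implies \widehat{m}_k$ precedes $m_k$ in $D$'s crossout order) and ($\widehat{m}_k$ precedes $m_k$ in $\widehat{D}$'s crossout order $\implies k$ jumpy), and then derives a contradiction from a minimal jumpy label $k$ by showing that the index $j$ with $m_j = \widehat{m}_k$ satisfies $j < k$ and is itself jumpy. You instead reduce to a single deletion $\widehat{D} = D - m$ (the reduction is sound, since the axis of label $k$ depends only on the parity of $k$, so the coordinate inequalities compose along a chain of deletions) and run a forward induction on the step $i$, maintaining the coupling invariant $\widehat{D}_{i+1} = D_{i+1} - \delta_i$ for a single ``displaced'' morsel $\delta_i$; your case analysis and the bookkeeping that $\delta_i \in D_{i+1}$ are both right, and the strict inequality in the case $m_i = \delta_{i-1}$ uses the totally-ordered-preferences hypothesis just as it should. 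What your argument buys is strictly more structural information: it shows that deleting one morsel leaves the crossout sequence unchanged except along a forward-propagating chain of displacements, with $m_i' = m_i$ at every step where the hole has not yet been ``caught.'' What the paper's extremal argument buys is brevity and the ability to handle an arbitrary subdinner in one pass without the composition step.
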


\begin{proof}
  Let $B$ be the crossout board for $D$, with crossout sequence
  \[
  m_1, m_2, \ldots, m_{|D|}.
  \]
  Let $\widehat{B}$ be the crossout board for $\widehat{D}$, with crossout sequence
  \[
  \widehat{m}_1, \widehat{m}_2, \ldots, \widehat{m}_{|\widehat{D}|}.
  \]
 For morsels $p$ and $q$ of $D$, we write
  $
  p \mathop{<}_D q
  $
  to mean that $p$ appears before $q$ in the crossout sequence for $D$.  If $p$ and $q$ are also morsels of $\widehat{D}$, then we write
  $
  p \mathop{<}_{\widehat{D}} q
  $
  to mean that $p$ appears before $q$ in the crossout sequence for $\widehat{D}$.
In particular, for any $1 \leq j,k \leq |\widehat{D}|$ we have
  \begin{equation}
      m_j \mathop{<}_D m_k \iff j < k \iff \widehat{m}_j \mathop{<}_{\widehat{D}} \widehat{m}_k. 
    \label{eqn:jlek}
  \end{equation}

Given $1 \leq k \leq |\widehat{D}|$, let us say $k$ is \emph{jumpy} if the label $k$ is strictly closer to the origin in $\widehat{B}$ than in $B$.  We will show that there are no jumpy labels.

Let $P$ be the player who places the label $k$ (so $P$ is Alice if $k$ is odd, Bob if $k$ is even).  When $P$ places the label $k$ on board $B$ next to the morsel $m_k$, this morsel is the closest available to the origin along $P$'s axis.  If $k$ is jumpy, then the morsel $\widehat{m}_k$ is closer to the origin along $P$'s axis, which means that $\widehat{m}_k$ is \emph{unavailable}, that is, it was already labeled in $B$ by some $j<k$ (Figure~\ref{figure:proofaid}).  Hence
  \begin{equation}
    k\mbox{ is jumpy} \implies \widehat{m}_k \mathop{<}_D m_k. 
  \label{eqn:kjumpy1}
\end{equation}

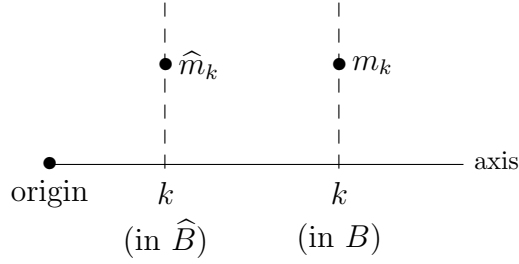
\begin{figure}
\[
\begin{xy}
0;
(0.55,0):
(0,-4)*{}; (100,-4)*{} **\dir{-};
(28,-5)*{}; (28,35)*{} **\dir{--};
(70,-5)*{}; (70,35)*{} **\dir{--};
(28,-11)*{k};
(28,-22)*{ \mbox{(in $\widehat{B}$)}};
(70,-11)*{k};
(70,-22)*{ \mbox{(in $B$)}};
(28,20)*{\bullet};
(70,20)*{\bullet};
(36,20)*{\widehat{m}_k};
(78,20)*{m_k};
(0,-4)*{\bullet};
(0,-11)*{\mbox{origin}};
(108,-3)*{\mbox{\begin{footnotesize}axis\end{footnotesize}}};
\end{xy}
\]
\caption{If label $k$ is closer to the origin in $\widehat{B}$ than in $B$, then $\widehat{m}_k$ must have been labeled already in $B$ by some $j<k$, i.e. $\widehat{m}_k \mathop{<}_D m_k$.}
  \label{figure:proofaid}
\end{figure}

Note also that if $\widehat{m}_k \mathop{<}_{\widehat{D}} m_k$, then in the crossout sequence for $\widehat{B}$, both $\widehat{m}_k$ and $m_k$ are available at step $k$ and $\widehat{m}_k$ is chosen.  Therefore, the label $k$ is placed closer to the origin in $\widehat{B}$ than in $B$.  Hence
  \begin{equation}
    \widehat{m}_k \mathop{<}_{\widehat{D}} m_k \implies k\mbox{ is jumpy}.
  \label{eqn:kjumpy2}
\end{equation}

Now suppose for a contradiction that one of the labels $1,\ldots,|\widehat{D}|$ is jumpy, and let $k$ be the smallest jumpy label. 
Since $\widehat{m}_k \in \widehat{D}$ and $\widehat{D}\subset D$, the morsel $\widehat{m}_k$ also belongs to $D$.  Let $j$ be its label on the crossout board of $D$; that is,
\begin{equation}
  m_j = \widehat{m}_k.
  \label{eqn:jk}
\end{equation}
Then 
  \begin{align*}
    k \mbox{ is jumpy} &\implies \widehat{m}_k \mathop{<}_D m_k & \mbox{by \eqref{eqn:kjumpy1}}\\
    &\implies m_j \mathop{<}_D m_k & \mbox{by \eqref{eqn:jk}}\\
  &\implies j < k  & \mbox{by \eqref{eqn:jlek}}\\
  &\implies \widehat{m}_j \mathop{<}_{\widehat{D}} \widehat{m}_k & \mbox{by \eqref{eqn:jlek}}\\
  &\implies \widehat{m}_j \mathop{<}_{\widehat{D}} m_j& \mbox{by \eqref{eqn:jk}}\\
  &\implies j \mbox{ is jumpy} & \mbox{by \eqref{eqn:kjumpy2}}
\end{align*}
That is, $j<k$ and $j$ is jumpy.  But $k$ was the smallest jumpy label.  This contradiction shows that there are no jumpy labels, completing the proof.
\end{proof}

The \emph{crossout scores} $\chi_A(D)$ and $\chi_B(D)$ are the scores for Alice and Bob when both play the crossout strategy: 
	\begin{align*}
	\chi_A(D) = v_A^D(c,c) &= m_2 + m_4 + \cdots + m_{2\floor{n/2}},  \\
	\chi_B(D) = v_B^D(c,c) &= m_1 + m_3 + \cdots + m_{2\ceiling{n/2}-1}.
	\end{align*}
These scores are easy to read off from the crossout board. The \emph{unlabeled} morsel locations on a player's axis are precisely the utilities of the morsels he eats if both players follow the crossout strategy.  Therefore, the crossout scores $\chi_A(D)$ and $\chi_B(D)$ are obtained by summing the unlabeled locations (marked with dashes in Figure \ref{figure:crossoutboard}) on the $a$- and $b$-axes respectively.  For instance, for the board $D$ pictured in Figure~\ref{figure:crossoutboard}, we have $\chi_A(D) = 4+5+6+8$ and $\chi_B(D) = 3+5+6+8$.

If we also wish to show the order of play, then we can label the $a$-coordinate of the morsel eaten by Alice in turn $i$ with the symbol $A_i$, and the $b$-coordinate of the morsel eaten by Bob in turn $j$ with the symbol $B_j$ as shown in Figure~\ref{figure:crossoutboardscores}.  Alice's score is the sum of the $a$-coordinates labeled with $A$'s, and Bob's score is the sum of the $b$-coordinates labeled with $B$'s.  

\begin{figure}
\[
\begin{xy}
0;
(0.6,0):
(0,0)*{}; (70,0)*{} **\dir{-};
(0,10)*{}; (70,10)*{} **\dir{-};
(0,20)*{}; (70,20)*{} **\dir{-};
(0,30)*{}; (70,30)*{} **\dir{-};
(0,40)*{}; (70,40)*{} **\dir{-};
(0,50)*{}; (70,50)*{} **\dir{-};
(0,60)*{}; (70,60)*{} **\dir{-};
(0,70)*{}; (70,70)*{} **\dir{-};
(0,0)*{}; (0,70)*{} **\dir{-};
(10,0)*{}; (10,70)*{} **\dir{-};
(20,0)*{}; (20,70)*{} **\dir{-};
(30,0)*{}; (30,70)*{} **\dir{-};
(40,0)*{}; (40,70)*{} **\dir{-};
(50,0)*{}; (50,70)*{} **\dir{-};
(60,0)*{}; (60,70)*{} **\dir{-};
(70,0)*{}; (70,70)*{} **\dir{-};
(30,30)*{\bullet};
(10,20)*{\bullet};
(60,40)*{\bullet};
(50,10)*{\bullet};
(20,50)*{\bullet};
(70,60)*{\bullet};
(0,70)*{\bullet};
(40,0)*{\bullet};
(-10,0)*{2};
(-10,10)*{4};
(-10,20)*{B_6};
(-10,30)*{6};
(-10,40)*{B_2};
(-10,50)*{B_4};
(-10,60)*{8};
(-10,70)*{B_8};
(0,-10)*{1};
(10,-10)*{3};
(20,-10)*{5};
(30,-10)*{A_3};
(40,-10)*{A_7};
(50,-10)*{A_5};
(60,-10)*{7};
(70,-10)*{A_1};
(-22,11)*{\begin{sideways}\begin{footnotesize}$\mbox{Bob ($b$)}\longrightarrow$\end{footnotesize}\end{sideways}};
(15,-23)*{\mbox{\begin{footnotesize}Alice ($a$)\end{footnotesize}}\longrightarrow};
\end{xy}
\]
  \caption{A crossout board showing the sequence of play: Alice eats the morsel above the label $A_i$ on turn $i$, and Bob eats the morsel to the right of the label $B_j$ on turn $j$.}
  \label{figure:crossoutboardscores}
\end{figure}
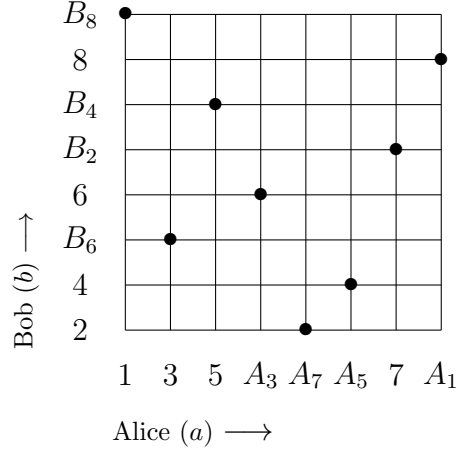

In $D$ of Figure \ref{figure:crossoutboardscores} we see that Alice, who plays first, eats her favourite morsel $(8,7)$ on her first turn.  In the remaining game $D - (8,7)$, Bob moves first but does not eat his favourite morsel $(1,8)$ until his last move (for such is Alice's loathing for it that he can safely ignore it until the end).  An interesting property of the crossout strategy, which we leave as an exercise to the reader since it is not needed for the proof of the main theorem, is that if both players follow it, then the first player eventually eats her favourite morsel. 

\subsection*{The main lemma}
The next lemma shows that neither player can improve his crossout score by choosing a different first morsel.  

\begin{lemma}[Main Lemma]
  \label{lem:mainlemma}
  Let $D$ be a dinner, and let $m$ be a morsel of $D$.  Let $P$ be the player to move first in $D$.  Then
  \begin{equation*}
    u_P(m) + \chi_P(D-m) \le \chi_P(D).
  \end{equation*}
\end{lemma}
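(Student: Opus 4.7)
My plan is to convert the main inequality from a statement about crossout scores $\chi_P$ into a term-by-term comparison that the Crossout Board Lemma already controls. For any dinner $E$, the total $P$-utility $\sum_{m' \in E} u_P(m')$ splits as $\chi_P(E) + \sigma_P(E)$, where I let $\sigma_P(E)$ denote the total $u_P$-utility of the morsels $P$'s opponent eats under mutual crossout play in $E$. Applying this conservation identity to $D$ and to $D-m$ and subtracting yields
\[
u_P(m) \;=\; \bigl(\chi_P(D) - \chi_P(D-m)\bigr) + \bigl(\sigma_P(D) - \sigma_P(D-m)\bigr),
\]
so the main lemma is equivalent to the inequality $\sigma_P(D-m) \ge \sigma_P(D)$.

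The next step is to recognise $\sigma_P$ in terms of the crossout board. Since Bob has the final move by convention, Alice always eats the even-indexed morsels of the crossout sequence and Bob always eats the odd-indexed ones. Therefore the opponent of $P$ eats exactly those morsels whose crossout labels are placed on $P$'s own axis (odd labels on Alice's axis if $P=A$, even labels on Bob's axis if $P=B$), and for each such morsel the coordinate of its label on $P$'s axis is precisely its $u_P$-value. In other words, $\sigma_P(E)$ equals the sum of the coordinates along $P$'s axis of all labels appearing on that axis in the crossout board of $E$.

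A brief parity check shows that passing from $D$ to $D-m$ reduces $P$'s own morsel count by one while leaving the opponent's count unchanged, so the set of labels appearing on $P$'s axis is identical in the crossout boards of $D$ and of $D-m$. Hence $\sigma_P(D)$ and $\sigma_P(D-m)$ are sums indexed by the same finite label set, and the Crossout Board Lemma applied to $\widehat{D} = D-m \subset D$ asserts that each such label sits at least as far from the origin on $P$'s axis in $D-m$ as in $D$. Summing these inequalities term by term gives $\sigma_P(D-m) \ge \sigma_P(D)$, and the main lemma follows from the identity above.

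The only delicate piece of the plan is the parity bookkeeping in the second and third paragraphs: one must check, handling both parities of $|D|$, that the opponent of $P$ eats the same number of morsels in $D$ and in $D-m$, so that $\sigma_P(D)$ and $\sigma_P(D-m)$ really are sums over a common label set on $P$'s axis. Once that identification is in hand, the Crossout Board Lemma does all of the combinatorial work and no further argument is required.
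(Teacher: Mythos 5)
Your proof is correct and is essentially the paper's own argument: both reduce the inequality to showing that the sum of the labeled positions on $P$'s axis (your $\sigma_P$, the utilities to $P$ of the morsels the opponent eats) does not decrease when passing from $D$ to $D-m$, check by parity that the label set on $P$'s axis is unchanged, and then apply the Crossout Board Lemma label by label. The only cosmetic difference is that you phrase the complementation as a conservation identity, where the paper speaks directly of labeled versus unlabeled positions.
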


\begin{proof}
  We compare the crossout boards for $D$ and $D-m$ (Figure~\ref{figure:crossoutboard}).  In the latter, a morsel has been removed.  Player $P$ is the second player to move in $D-m$, so he swallows one fewer morsel in $D-m$ than in $D$.  This means the number of labels on $P$'s axis is the same in the crossout boards of $D$ and $D-m$.  By the Crossout Board Lemma~\ref{lem:nmoveout}, each label on the board for $D - m$ is no closer to the origin than the corresponding label on the board for $D$.  Therefore the sum of the labeled positions on $P$'s axis is at least as large in $D-m$ as in $D$.  Hence the sum of the \emph{unlabeled} positions on $P$'s axis is no larger in $D-m$ than in $D$.  For $D$, this sum is the crossout score $\chi_P(D)$.  For the board $D-m$, this sum consists of the score $\chi_P(D-m)$ plus the utility $u_P(m)$ of the removed morsel~$m$.
 \end{proof}
 
\subsection*{Proof of Theorem \ref{thm:crossoutequil}}


Let $D$ be a dinner of $n$ morsels, and let $c$ be the crossout strategy.  We induct on $n$ to show that for any player $P \in \{A,B\}$ and any strategy $s$,
	\begin{align*}
	 v_P^D(s,c) \leq v_P^D(c,c).
	 \end{align*}
The base case $n=1$ is trivial because $c$ is the only strategy: In a game with one morsel, the only thing you can do is eat it!
	
On to the inductive step. Suppose first that $P$ is the first player to move in $D$.  Let $m = s(D)$.  Then	 
 \begin{align*}
   v_P^D(s,c) &= u_P(m) + v_P^{D-m}(s,c) & \mbox{by \eqref{eq:scoredef}} \\
    		&\leq u_P(m) + v_P^{D-m}(c,c) & \mbox{by the inductive hypothesis} \\
      &\le v_P^D(c,c) & \mbox{\hfill by Main Lemma \ref{lem:mainlemma}.}
  \end{align*}
It remains to consider the case that $P$ is the second player to move in $D$.  Letting $m=c(D)$, we have by the inductive hypothesis and \eqref{eq:scoredef}, 
	\[ v_P^D(s,c) = v_P^{D-m}(s,c) \leq v_P^{D-m}(c,c) = v_P^D(c,c) \]
which completes the proof.

\section*{Concluding Remarks}

We have analyzed Ethiopian Dinner as a non-cooperative game, and found an efficiently computable subgame perfect equilibrium, the crossout strategy.  Here we discuss its efficiency, and mention some variants and generalizations.

\subsection*{Pareto efficiency and inefficiency}

An \emph{outcome} of a dinner $D$ is a partition of its morsels between Alice and Bob.
An outcome is called \emph{Pareto inefficient} if there exists another outcome that is at least as good for both players and is strictly better for one of them.  
Equilibrium strategies may result in an outcome that is Pareto inefficient, as demonstrated by the famous Prisoner's Dilemma, in which both players do better by mutual cooperation than by mutual defection even though mutual defection is the unique equilibrium~\cite{Owen}.  


For the permutation dinner shown in Figure~\ref{fig:typicalscores}, we see that the crossout outcome $(c,c)$ is Pareto efficient because there are no dots lying (weakly) both above and to the right of the crossout score ({\Large ${\color{red}\bullet}$}).  In fact, Brams and Straffin \cite[Theorem~1]{BS} prove that the crossout outcome is always Pareto efficient with respect to the ``pairwise comparison'' partial order on outcomes.  On the other hand, the crossout outcome is \emph{not} always Pareto efficient for the score function we have been considering, the sum of the utilities of morsels eaten.
  Among permutation dinners, the smallest Pareto inefficient examples occur for dinners of size~$6$, for which there are two:
\begin{align*}
D_1 &= \{(1, 5), (2, 1), (3, 2), (4, 4), (5, 6), (6, 3) \}, \\
D_2 &= \{(1, 5), (2, 1), (3, 2), (4, 3), (5, 4), (6, 6) \}.
\end{align*}
If each player employs his greedy strategy in dinner $D_1$, then Alice's score of $6+4+3$ is equal to her crossout score, while Bob's score of $6+5+1$ is one point better than his crossout score.
The outcome that improves on crossout for $D_2$ is rather interesting: Bob can do one point better than his crossout score of $11$ without hurting Alice's score, but only if Alice magnanimously leaves him the most delicious morsel $(6,6)$ instead of gobbling it up on the first bite!  Bob must then return the favor by refraining from eating $(3,2)$ on the fourth move.  Because of the trust it requires, it is hard to imagine the players achieving this outcome if they are not able to communicate.

To see how often crossout is Pareto inefficient, we used Sage Mathematics Software~\cite{Sage} to check 10\,000 randomly generated permutation dinners of size~16 for Pareto efficiency:  In all but 672 cases the $(c,c)$ outcome was Pareto efficient, and in all but 241 it was weakly Pareto efficient (that is, no other outcome resulted in strict improvements for both players).  The improvement in scores achieved by alternate outcomes was small: in an exhaustive check of all possible game outcomes in all $10\,000$ dinners, the largest improvement for any player was less than $8 \%$.  These findings provide some evidence that the crossout strategy is reasonably efficient.  


\subsection*{Generalized payoffs}

The outcome of an Ethiopian Dinner is a partition of the index set $\{1,\ldots,n\}$ into a set $\mathcal{A}=\{i_1,\ldots,i_{\floor{n/2}}\}$ of $\floor{n/2}$ morsels eaten by Alice and a set $\mathcal{B} = \{j_1,\ldots,j_{\ceiling{n/2}}\}$ of $\ceiling{n/2}$ morsels eaten by Bob.  We have assumed that the final scores (payoffs) for Alice and Bob  take the form
	\[ p_A = \sum_{i \in \mathcal{A}} a_i \qquad \mbox{and} \qquad p_B = \sum_{j \in \mathcal{B}} b_j. \]
This particular payoff function is not essential for the argument, however.  Let
	\[ f_A : \RR^{\floor{n/2}} \to \RR \qquad \mbox{and} \qquad f_B : \RR^{\ceiling{n/2}} \to \RR \]  
be functions that are strictly increasing in each coordinate, and symmetric with respect to permutations of the coordinates.  Then the Ethiopian Dinner game with payoffs 
	\[ p_A = f_A(a_{i_1},\ldots,a_{i_{\floor{n/2}}}) \qquad \mbox{and} \qquad p_B = f_B(b_{j_1},\ldots,b_{j_{\ceiling{n/2}}}) \]
has crossout as its optimal strategy.  Indeed, the proof we have given uses only the relative order of the $a_i$ and the $b_j$, and not their actual values. 

One could also generalize the payoff function so that Alice's payoff depends not only on the morsels she ate but also on the morsels Bob ate, and vice versa.  A natural choice is
	\begin{align*} p_A &= \alpha_A \sum_{i \in \mathcal{A}} a_i + \beta_A \sum_{j \in \mathcal{B}} b_j, \qquad \mbox{and} \\
	 p_B &= \alpha_B \sum_{i \in \mathcal{A}} a_i + \beta_B \sum_{j \in \mathcal{B}} b_j. \end{align*}
That is, Alice's payoff is $\alpha_A$ times the sum of her own utilities of the morsels she ate, plus $\beta_A$ times the sum of the utilities to Bob of the morsels Bob ate.  Bob's payoff is defined similarly.  
The ratios $\beta_A/ \alpha_A$ and $\alpha_B / \beta_B$ measure the degree of altruism of the two players.    The scenario of friends eating in an Ethiopian restaurant might correspond to values of these ratios strictly between $0$ and $1$.  One can also imagine scenarios where $\beta_A / \alpha_A >1$: perhaps Alice is Bob's mother and the morsels in question are brussels sprouts.

All of these games turn out to be equivalent to Ethiopian Dinner.  Suppose we are considering the above payoffs $p_A$ and $p_B$ on the dinner $D$ consisting of morsels $m_i = (a_i, b_i)$ for $i=1,\ldots,n$.  Translating all of a player's utilities by an additive constant has no effect on strategy, so we may assume that $\sum_{i=1}^n a_i = \sum_{i=1}^n b_i = 0$.  Then
\[
\sum_{j \in \mathcal{B}} b_j = - \sum_{i \in \mathcal{A}} b_i \qquad \mbox{and} \qquad \sum_{i \in \mathcal{A}} a_i = - \sum_{j \in \mathcal{B}} a_j.
\]
Now let $D'$ be the dinner consisting of morsels
\[
m_i' = (\alpha_A a_i - \beta_A b_i,\, \beta_B b_i - \alpha_B a_i ), \qquad \mbox{ for } i=1,\ldots,n.
\]
Any strategy $s$ on $D$ has a corresponding strategy $s'$ on $D'$ (which chooses $m'_i$ whenever $s$ chooses $m_i$), and
\[
 p_P^D(s,t) = v_P^{D'}(s',t')
\]
for both players $P \in \{A,B\}$.
In other words, the modified payoff in~$D$ equals the usual Ethiopian Dinner payoff in~$D'$.  Now let $s$ be the strategy on $D$ for which $s'=c$ (that is, $s'$ is the crossout strategy on $D'$).  Then the pair $(s,s)$ is an equilibrium for the modified payoff dinner~$D$.

We distinguish two extreme cases.

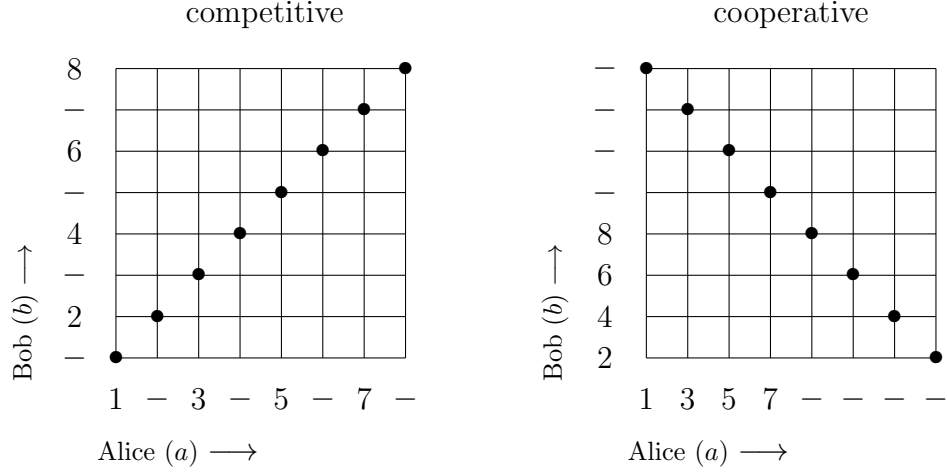
\begin{figure}
\[
\begin{xy}
0;
(0.55,0):
(0,0)*{}; (70,0)*{} **\dir{-};
(0,10)*{}; (70,10)*{} **\dir{-};
(0,20)*{}; (70,20)*{} **\dir{-};
(0,30)*{}; (70,30)*{} **\dir{-};
(0,40)*{}; (70,40)*{} **\dir{-};
(0,50)*{}; (70,50)*{} **\dir{-};
(0,60)*{}; (70,60)*{} **\dir{-};
(0,70)*{}; (70,70)*{} **\dir{-};
(0,0)*{}; (0,70)*{} **\dir{-};
(10,0)*{}; (10,70)*{} **\dir{-};
(20,0)*{}; (20,70)*{} **\dir{-};
(30,0)*{}; (30,70)*{} **\dir{-};
(40,0)*{}; (40,70)*{} **\dir{-};
(50,0)*{}; (50,70)*{} **\dir{-};
(60,0)*{}; (60,70)*{} **\dir{-};
(70,0)*{}; (70,70)*{} **\dir{-};
(0,0)*{\bullet};
(10,10)*{\bullet};
(20,20)*{\bullet};
(30,30)*{\bullet};
(40,40)*{\bullet};
(50,50)*{\bullet};
(60,60)*{\bullet};
(70,70)*{\bullet};
(-10,0)*{-};
(-10,10)*{2};
(-10,20)*{-};
(-10,30)*{4};
(-10,40)*{-};
(-10,50)*{6};
(-10,60)*{-};
(-10,70)*{8};
(0,-10)*{1};
(10,-10)*{-};
(20,-10)*{3};
(30,-10)*{-};
(40,-10)*{5};
(50,-10)*{-};
(60,-10)*{7};
(70,-10)*{-};
(-22,11)*{\begin{sideways}\begin{footnotesize}$\mbox{Bob ($b$)}\longrightarrow$\end{footnotesize}\end{sideways}};
(15,-23)*{\mbox{\begin{footnotesize}Alice ($a$)\end{footnotesize}}\longrightarrow};
  (36,83)*{\mbox{competitive}}
\end{xy}
\quad\quad\quad\quad
\begin{xy}
0;
(0.55,0):
(0,0)*{}; (70,0)*{} **\dir{-};
(0,10)*{}; (70,10)*{} **\dir{-};
(0,20)*{}; (70,20)*{} **\dir{-};
(0,30)*{}; (70,30)*{} **\dir{-};
(0,40)*{}; (70,40)*{} **\dir{-};
(0,50)*{}; (70,50)*{} **\dir{-};
(0,60)*{}; (70,60)*{} **\dir{-};
(0,70)*{}; (70,70)*{} **\dir{-};
(0,0)*{}; (0,70)*{} **\dir{-};
(10,0)*{}; (10,70)*{} **\dir{-};
(20,0)*{}; (20,70)*{} **\dir{-};
(30,0)*{}; (30,70)*{} **\dir{-};
(40,0)*{}; (40,70)*{} **\dir{-};
(50,0)*{}; (50,70)*{} **\dir{-};
(60,0)*{}; (60,70)*{} **\dir{-};
(70,0)*{}; (70,70)*{} **\dir{-};
(0,70)*{\bullet};
(10,60)*{\bullet};
(20,50)*{\bullet};
(30,40)*{\bullet};
(40,30)*{\bullet};
(50,20)*{\bullet};
(60,10)*{\bullet};
(70,0)*{\bullet};
(-10,0)*{2};
(-10,10)*{4};
(-10,20)*{6};
(-10,30)*{8};
(-10,40)*{-};
(-10,50)*{-};
(-10,60)*{-};
(-10,70)*{-};
(0,-10)*{1};
(10,-10)*{3};
(20,-10)*{5};
(30,-10)*{7};
(40,-10)*{-};
(50,-10)*{-};
(60,-10)*{-};
(70,-10)*{-};
(-22,11)*{\begin{sideways}\begin{footnotesize}$\mbox{Bob ($b$)}\longrightarrow$\end{footnotesize}\end{sideways}};
(15,-23)*{\mbox{\begin{footnotesize}Alice ($a$)\end{footnotesize}}\longrightarrow};
  (35,83)*{\mbox{cooperative}}
\end{xy}
\]
 \caption{Crossout boards for a fully competitive and fully cooperative dinner.}
  \label{fig:competitiveness}
\end{figure}

If $\alpha_A = \beta_B = 1$ and $\alpha_B = \beta_A =-1$, then the game is zero-sum.  In the terminology of combinatorial game theory, each morsel $m=(a,b)$ is a \emph{switch} $\{a|-b\}$, so the full game is a sum of switches.  The morsel $(a,b)$ has temperature $a+b$, and optimal play proceeds in order of decreasing temperature  (see \cite{BCG} for background).  The equivalent Ethiopian Dinner $D'$ has morsels $(a_i+b_i,a_i+b_i)$ of equal appeal to both players, and crossout on $D'$ gives the same decreasing-temperature play.  This dinner could result in a couple of burned tongues!

If $\alpha_A = \alpha_B = \beta_A = \beta_B =1$, then the game is fully cooperative.  Both players have the same goal of maximizing their joint welfare.  Since the game rules constrain them to alternate moves, the optimal play is the following.  Order the morsels $m_1,\ldots,m_n$ so that $a_i - b_i$ is a decreasing function of $i$.  Alice takes morsels $m_1,\ldots,m_{\floor{n/2}}$, and Bob takes morsels $m_{\floor{n/2}+1},\ldots,m_n$.  In this case, the equivalent Ethiopian Dinner $D'$ has morsels $(a_i-b_i,b_i-a_i)$ and crossout on $D'$ gives the optimal strategy just described.  Figure~\ref{fig:competitiveness} shows examples of crossout boards for a zero-sum (competitive) dinner and a cooperative dinner.

\subsection*{Combinatorics at the dinner table}

We can measure the ``cooperativeness'' of a permutation dinner by its inversions.  Let $\pi=(\pi_1,\ldots,\pi_n)$ be a permutation of $1,\ldots,n$.  For each pair of indices $i<j$ such that $\pi_i >\pi_j$, we call the pair $(i,j)$ a \emph{left inversion} of $\pi$ and the pair $(\pi_i,\pi_j)$ a \emph{right inversion} of $\pi$.  Both players should be pleased with a permutation dinner if it has a lot of inversions, because each inversion represents a pair of morsels $m_i$, $m_j$ such that Alice prefers $m_j$ while Bob prefers $m_i$.  Hopkins and Jones~\cite{HJ} show that if the left inversions of~$\pi$ are a subset of the left inversions of~$\pi'$, then Alice's crossout score for the permutation dinner $\pi'$ is at least as good as for~$\pi$.  In fact they show more: there is a bijection between the set of morsels Alice eats in~$\pi$ and the set she eats in~$\pi'$ such that each morsel eaten in~$\pi'$ is at least as tasty to Alice as the corresponding one in~$\pi$. (Alice prefers a prime piece of pie to an ordinary one: after all, who wouldn't?) Likewise, Bob prefers dinners with a lot of right inversions. (Curiously, although right inversions are in bijection with left inversions, set inclusion of right inversions induces a different partial ordering on permutations than does set inclusion of left inversions, as the reader can verify for permutations of $3$ elements!)

Consider permutation dinners having an even number of morsels, $n=2k$.  If we define an \emph{outcome} for Alice as the set of utilities of the morsels she eats, then the number of conceivable outcomes (over all permutation dinners of size $2k$) is the binomial coefficient $2k \choose k$.  But not all of these outcomes are achievable if both players play the crossout strategy.  For instance, Alice never has to eat her least favourite morsel (of utility $1$), and she eats at most one of her three least favourite morsels (of utilities $1$, $2$ and $3$).  In general, for each $j\geq 0$ she eats at most $j$ of her $2j+1$ least favourite morsels.  Hopkins and Jones \cite{Hopkins, HJ} show that if both players play crossout, then the number of attainable outcomes for Alice is the famous Catalan number $C_k = \frac{1}{k+1} {2k \choose k}$, and the number of attainable outcomes for Bob is $C_{k+1}$.

\subsection*{Cake cutting and envy-free division}

There is a large literature on \emph{cake-cutting} \cite{BT} in which a cake (identified with the interval $[0, 1]$) comes equipped with a measure for each player describing the utility to him of eating a given piece. One problem is to find an \emph{envy-free} partition of the cake, which means that each player prefers the piece assigned to him over the pieces assigned to the other players. When the cake is comprised of indivisible
slices (as, for example, in a game of Ethiopian Dinner), this criterion becomes impossible to achieve in general, and
finding an envy-minimizing allocation
is a hard computational problem~\cite{LMMS}.  The outcome of the crossout
strategy is reasonably close to envy-free: the first player is not
envious, and the second player's envy is bounded by the utility of his
favourite morsel.  Another way of achieving an approximately envy-free
allocation is described in \cite[Theorem 2.1]{LMMS}.  The algorithm described
there is even faster than crossout, because it does not require
sorting the lists of utilities.  In the case when an envy-free allocation exists, the \emph{undercut procedure} of~\cite{BKK} gives an algorithm for finding one.

\subsection*{First move advantage}

Who takes the first bite at dinner might seem of little consequence, but imagine instead that Alice and Bob are taking turns dividing up the assets of their great uncle's estate.  Alice, choosing first, claims the mansion, leaving Bob to console himself with the sports car.  To reduce the advantage of moving first, Brams and Taylor \cite[Ch.\ 3]{BT2} propose what they call ``balanced alternation,'' which uses the Thue-Morse sequence
	\[ A,B,B,A,B,A,A,B,B,A,A,B,A,B,B,A,\ldots \]
as the order of turns.  Alice chooses the first item, then Bob chooses the second and third, Alice the fourth, and so on.  The Thue-Morse sequence is the unique sequence starting with $A$ that is fixed under the operation of replacing each term $A$ by $A,B$ and each term $B$ by $B,A$.  Here is another definition that the reader might enjoy proving is equivalent: the terms are indexed starting from $n=0$, and the $n$-th term equals $A$ or $B$ according to whether the sum of the binary digits of $n$ is even or odd.  Many amazing properties of this sequence can be found in~\cite{AS}.

A further innovation of \cite{BT2} aimed at reducing the first move advantage is simultaneous choices.  On each turn both players choose a morsel, and if they choose distinct morsels then each eats his choice.  If they choose the same morsel, then that morsel is marked as ``contested'' and simultaneous play continues with the remaining uncontested morsels.  When only contested morsels remain, the players revert to the Thue-Morse order of play.  

It would be interesting to quantify the intuition that the Thue-Morse order tends to produce a fair outcome.  Some assumptions on the utilities would be needed: In the example of the estate, if the value of the mansion exceeds the combined value of all the other assets, then the first player retains an advantage no matter what.
 
\subsection*{Inducing sincerity}

\emph{Mechanism design} asks how the rules of a game should be structured in order to induce the players to act in a certain way.  A common goal is to induce the players to make sincere choices -- that is, choices in line with their actual preferences.  For example, the voter who deserts his preferred candidate for one he deems has a better chance of winning is making an \emph{in}sincere choice.  Making sincere choices in Ethiopian Dinner means playing the greedy strategy of always eating your favourite available morsel.  Because the greedy strategy pair is not an equilibrium, the players have an incentive to make insincere choices by playing the more complicated crossout strategy.  How could the rules of Ethiopian Dinner be modified so as to make the greedy strategy pair an equilibrium?  Brams and Kaplan~\cite{BK} (see also \cite[ch.\ 9]{Brams}) design a mechanism that allows the players of an Ethiopian Dinner to offer one another trades, and show that its effect is to make the greedy strategy pair an equilibrium.



\section*{Open questions}

We conclude by describing a few natural variants that we do not know how to analyze.

\subsection*{Delayed gratification} 

Suppose that the utilities $u_A(m)$ and $u_B(m)$ depend not only on the morsel $m$, but also on when it is eaten.  A natural choice is to value a morsel eaten on turn~$i$ with $\lambda^i$ times its usual value, for a parameter $\lambda < 1$.  Thus, a morsel declines in value the longer it remains on the plate (perhaps the delicate flavours are fading).  The choice of exponential decay $\lambda^i$ corresponds to the common assumption in economics that a payoff received in the future should be discounted to its net present value according to the prevailing interest rate.  If the interest rate is $\alpha$, then $\lambda = 1/(1+\alpha)$.  In the resulting game, each player feels an urgency to eat her favourites early on.  Because time-sensitive payoffs break the symmetry assumption, our proof of equilibrium does not apply.  Can the crossout strategy be modified to produce an equilibrium?

\subsection*{Inaccessible morsels}
Ethiopian food is served atop \emph{injera}, a layer of spongy bread that can only be eaten once it is revealed.  If the game is played with the requirement that the order of consumption must respect a fixed partial ordering on the morsels, the crossout strategy may not be an allowable strategy.  What should take its place?

\subsection*{Imperfect information}

How should your dinner strategy change if you don't know your opponent's preferences?  Perhaps the morsels come in different types and there are many identical morsels of each type.  Then you might start by playing greedily, while trying to infer your opponent's preferences from his play.  Once you start to form a picture of his preferences, you can start playing a crossout-style strategy.  Bob seems to be staying away from the spinach, so you can probably leave it for later.  But was it an honest signal when he ate all those chickpeas?  Can deceit pay in a dinner of imperfect information?

\subsection*{Three's a crowd}

Ethiopian Dinner resembles the process of draft picks in sports.  Each team participating in the draft has its own belief about how much each player is worth, and the teams draft players one at a time according to some predetermined order of play.  Typically, many teams (more than two!) participate in the draft.  Brams and Straffin~\cite{BS} point out a number of pathologies in the case when the number of teams is greater than two.  For example, it may be to a team's advantage to choose later in the draft.   For players with certain specific preferences (called ``gallant knights'' in \cite{LSS}) the crossout strategy applies to games with any number of players.  But the crossout strategy does not seem to apply to games of three or more self-interested players, which leads us to end with a question. Is there an efficiently computable equilibrium for Ethiopian Dinner with three or more players? 

\begin{acknowledgement}
The first author's research was partially supported by an NSF Postdoctoral Research Fellowship.  The second author's research was supported by NSERC PDF-373333 and NSF MSPRF 0802915, and was partially performed while the author was at Simon Fraser University and the Pacific Insitute for the Mathematical Sciences at the University of British Columbia.  We thank the Centre for Experimental and Constructive Mathematics at Simon Fraser University for providing computer resources, and we are grateful to Elwyn Berlekamp, Kevin Doerksen,  Eric LeGresley, James Propp, Scott Sheffield and Jonathan Wise for useful suggestions.  Thanks to Renato Paes Leme for bringing reference \cite{KC} to our attention, and to anonymous referees whose astute suggestions significantly improved the paper.

We especially thank Asmara Restaurant in Cambridge, MA for the shared meal that inspired this paper.
(The clear winner on that occasion was the first author's wife, whose strategy was to ignore all talk of game theory and eat all of her favourite morsels while her dinner companions were distracted by analyzing the game!)
\end{acknowledgement}

\end{document}